\documentclass[a4paper,twocolumn,11pt,unpublished]{quantumarticle}
\pdfoutput=1
\usepackage[utf8]{inputenc}
\usepackage[english]{babel}
\usepackage[T1]{fontenc}
\usepackage{amsmath,amssymb,amsthm}
\usepackage{graphicx}
\usepackage{dcolumn}
\usepackage{bm}
\usepackage[numbers]{natbib}
\usepackage{hyperref}

\newtheorem{proposition}{Proposition}
\newtheorem{theorem}{Theorem}
\newtheorem{lemma}{Lemma}
\newtheorem{definition}{Definition}
\newtheorem{remark}{Remark}
\newtheorem{protocol}{Protocol}

\begin{document}

\title{Certifying Nonclassical Proper-Time Histories with a Quantum Clock}

\author{Shuai Zeng}
\email{zengshuai@cqupt.edu.cn}
\affiliation{School of Communication and Information Engineering, Chongqing University of Posts and Telecommunications, Chongqing 400065, P.R. China}

\begin{abstract}
  Quantum clocks can acquire relativistic phases from motional or gravitational proper-time differences, but reduced clock dephasing alone does not certify nonclassical proper-time histories. We formulate this distinction as a channel-certification problem. First, we show that any two-level single-time dephasing signal, including one generated by an effective quantum proper-time label, admits a classical random proper-time representation. We then define the convex set of classical mixtures of experimentally specified proper-time histories and prove a Choi-rank separation criterion for conditioned coherent history recombination. A two-branch Ramsey protocol gives explicit bright- and dark-port population witnesses outside this classical set. The certification is operational and relative to the specified history set: it rules out classical mixtures of the same implemented proper-time histories, not arbitrary classical protocols with different histories or controls.
\end{abstract}

\maketitle

\section{Introduction}
\label{sec:introduction}

Precision clocks now probe regimes where relativity and quantum mechanics meet. In trapped-ion optical clocks, motional quantum states can imprint relativistic time-dilation phases on clock evolution, leading to vacuum-induced, squeezing-enhanced, and quantum corrections to the clock frequency, as well as clock--motion correlations~\cite{Sorci2026,SmithAhmadi2020,Zych2011,Pikovski2015}. These developments raise a sharper question than whether proper time can affect a quantum clock: \emph{when does an observed clock signal certify nonclassical proper-time histories rather than a classical random proper-time parameter?} This operational question is complementary to relational-time and quantum-reference-frame approaches, where clocks or reference frames are themselves treated as quantum systems~\cite{PageWootters1983,Giacomini2019}.

This certification question is important because several quantum-looking signatures can still have classical random-time explanations. A clock phase shift, a frequency shift, or a loss of Ramsey visibility may arise from quantum motion, but once the motion is traced out, the observed reduced clock signal may be statistically indistinguishable from a classical distribution of proper times. We separate quantum-generated clock noise from certified nonclassical proper-time histories.

This distinction matters for forthcoming clock experiments: relativistic phases, clock--motion correlations, and visibility loss can be observed without yet excluding classical stochastic descriptions of the clock's proper-time history. A certification criterion must identify when the measured operation requires coherent recombination of proper-time histories.

Our result is a three-level certification hierarchy for proper-time signatures. At the first level, single-time reduced clock dephasing, even when generated by an effective quantum proper-time label, can be represented by a classical random proper-time distribution (Sec.~\ref{sec:classical_random}). At the second level, averaged controlled histories form classical mixtures of a specified history set generated by proper-time evolutions and known clock controls (Sec.~\ref{sec:proper_time_histories}). At the third level, conditioned Ramsey histories generated by coherent recombination of the same branches can leave this classical history-mixture set (Secs.~\ref{sec:separation}--\ref{sec:minimal_witness}). This hierarchy is the central object of the paper: it specifies what kind of clock evidence remains classically simulable and what kind certifies nonclassical proper-time histories.

The resulting distinction is operational. Quantum erasure and clock interferometry are established ingredients~\cite{ScullyDruhl1982,Margalit2015}; the new element is the simulability boundary between averaged proper-time histories and conditioned coherent history recombination. The same specified histories are classically simulable when averaged, but can leave the classical set when coherently recombined and conditioned on a selective erasure of which-proper-time-history information. This separation is witnessed by a minimal two-branch Ramsey protocol and a simple postselected population imbalance.

\paragraph{Contributions and assumptions.}
This paper makes four contributions. First, it proves a dephasing no-go result (Propositions~\ref{prop:dephasing}--\ref{prop:effective}): single-time reduced clock dephasing, even when generated by a quantum motional label, is simulable by a classical random proper-time distribution. Second, it defines the convex set $\mathrm{CPTH}(\mathcal H)$ of classical mixtures of an experimentally specified set $\mathcal H$ of proper-time histories (Definitions~\ref{def:history}--\ref{def:CPTH}). Third, it proves a Choi-rank separation criterion (Theorem~\ref{thm:choi}) showing that conditioned coherent recombination can produce a channel outside this set. Fourth, it constructs a minimal two-branch Ramsey witness (Protocol~\ref{prot:ramsey}) with a high-probability bright port and a low-probability dark port. The certification is relative to the specified history set $\mathcal H$; it does not claim to exclude arbitrary classical protocols with different histories or additional controls.

The key assumption is that the measurement apparatus implements a history-erasing projection in a basis that coherently recombines a finite, experimentally specified set of proper-time histories. The witness certifies failure of classical mixtures of \emph{these} histories; it does not provide a necessary-and-sufficient characterization of all forms of quantum proper-time nonclassicality.

\section{Classical random proper time and reduced dephasing}
\label{sec:classical_random}

Let the clock Hamiltonian be
\begin{equation}
  H_C=\sum_n E_n |n\rangle\langle n|.
\end{equation}
For a classical proper time $\tau$, the clock evolves as $U_\tau=e^{-iH_C\tau}$. A classical random proper-time channel is a convex mixture of such evolutions,
\begin{equation}
  \mathcal E_{\rm cl}(\rho)
  =
  \int d\mu(\tau)\,
  U_\tau \rho U_\tau^\dagger,
  \label{eq:Ecl}
\end{equation}
where $d\mu(\tau)$ is a probability measure. In this model, each run has a definite classical proper time, while the value fluctuates from shot to shot.

\begin{proposition}[Two-level dephasing no-go]
  \label{prop:dephasing}
  For a two-level clock with transition frequency $\omega$, every single-time dephasing signal
  \begin{equation*}
    \rho_{ge}\mapsto \Gamma\rho_{ge},
    \qquad |\Gamma|\le 1,
  \end{equation*}
  admits a classical random proper-time representation
  \begin{equation*}
    \Gamma = \int d\mu(\tau)\,e^{-i\omega\tau},
  \end{equation*}
  because the convex hull of phases $e^{-i\omega\tau}$ is the unit disk.
\end{proposition}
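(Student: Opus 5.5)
The plan is to construct the measure $\mu$ explicitly for each admissible $\Gamma$, and then check that the resulting random-proper-time channel~\eqref{eq:Ecl} reproduces the stated signal. For a two-level clock with $H_C=\tfrac{\omega}{2}(|e\rangle\langle e|-|g\rangle\langle g|)$ (up to an irrelevant constant), conjugation by $U_\tau$ leaves the populations $\rho_{gg},\rho_{ee}$ unchanged. It multiplies the coherence by a pure phase, $\rho_{ge}\mapsto e^{-i\omega\tau}\rho_{ge}$, with the sign convention fixed by the statement. Averaging over $\mu$ therefore gives $\rho_{ge}\mapsto\Gamma\rho_{ge}$ with $\Gamma=\int d\mu(\tau)\,e^{-i\omega\tau}$, and the populations are preserved. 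So the whole content of the proposition reduces to a statement about the characteristic function of $\mu$: every point of the closed unit disk $\{|\Gamma|\le1\}$ must be the expectation of $e^{-i\omega\tau}$ under some probability measure on $\tau$.

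Next, I would realize every $\Gamma$ with $|\Gamma|\le1$ using a measure supported on at most two proper times. Write $\Gamma=re^{-i\phi}$ with $0\le r\le1$. Choose $\alpha=\arccos r\in[0,\pi/2]$ and put mass $1/2$ at each of
\[
  \tau_\pm=\frac{\phi\pm\alpha}{\omega}.
\]
Then
\[
  \tfrac12\bigl(e^{-i(\phi+\alpha)}+e^{-i(\phi-\alpha)}\bigr)=e^{-i\phi}\cos\alpha=\Gamma.
\]
This is the concrete form of the remark in the statement that the convex hull of the circle $\{e^{-i\omega\tau}\}$ is the unit disk: any chord through a point of the disk can be chosen symmetric about that point's direction. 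The edge cases are covered by the same formula. For $r=1$, $\alpha=0$ and the measure is a point mass, giving pure phase evolution. For $r=0$, the two times differ by $\pi/\omega$, which gives complete dephasing. As an alternative, a uniform measure on an interval, or a Gaussian with suitable width, also works, but the two-point construction avoids computing any integrals.

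The only step needing care is being explicit about what ``single-time dephasing signal'' means. The proposition concerns channels of the stated form, i.e.\ population-preserving maps that act on the coherence only by the factor $\Gamma$. It does not cover channels with energy relaxation, which no mixture of unitaries $U_\tau$ could produce. Nor does it cover joint statistics at several times, where one fixed $\mu$ would have to reproduce $\Gamma(t)$ for all $t$ simultaneously. I would state this restriction at the outset, since it is precisely what makes the no-go trivial here and what the later, history-based results go beyond. I expect no genuine obstacle beyond this bookkeeping and the sign and phase convention of $H_C$.
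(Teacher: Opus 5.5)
Your proof is correct and follows the paper's argument: both write $\Gamma$ as the mean of $e^{-i\omega\tau}$ under an explicit two-point proper-time measure. The only difference is the chord of the circle. The paper puts weights $(1\pm r)/2$ on the antipodal phases $\pm e^{i\phi}$ along the diameter through $\Gamma$, while you put equal weights $1/2$ on the ends of the chord perpendicular to that diameter; your check that the mixture of $U_\tau$ preserves populations and rescales only the coherence is a step the paper leaves implicit.
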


\begin{proof}
  Write $\Gamma=re^{i\phi}$ with $0\le r\le1$. Choose $\tau_1,\tau_2$ such that
  $e^{-i\omega\tau_1}=e^{i\phi}$ and $e^{-i\omega\tau_2}=-e^{i\phi}$.
  The probability measure
  $d\mu(\tau)=\frac{1+r}{2}\delta(\tau-\tau_1)d\tau+\frac{1-r}{2}\delta(\tau-\tau_2)d\tau$
  gives $\Gamma$. A single Ramsey visibility or phase shift therefore cannot certify nonclassical proper time.
\end{proof}

\begin{proposition}[Effective proper-time label]
  \label{prop:effective}
  Consider the effective evolution $U=e^{-iH_C\otimes\hat\tau}$, where $\hat\tau$ is a self-adjoint label of motional proper-time branches. For an initially uncorrelated state $\rho_C\otimes\rho_M$, tracing out the motional degree of freedom gives reduced clock coherences that are characteristic functions of the classical probability measure $d\mu_{\rho_M}(\tau)=\operatorname{Tr}_M[\rho_M P(d\tau)]$, where $P(d\tau)$ is the spectral measure of $\hat\tau$.
\end{proposition}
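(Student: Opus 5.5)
The plan is to use the spectral theorem for $\hat\tau$ and push it through the functional calculus for the joint generator $H_C\otimes\hat\tau$. Since $H_C=\sum_n E_n|n\rangle\langle n|$ is diagonal, I write $H_C\otimes\hat\tau=\sum_n |n\rangle\langle n|\otimes E_n\hat\tau$. Each block $E_n\hat\tau$ acts on an orthogonal sector, so they commute. Hence
\begin{equation*}
  U=\sum_n |n\rangle\langle n|\otimes e^{-iE_n\hat\tau}
  =\sum_n |n\rangle\langle n|\otimes\int e^{-iE_n\tau}\,P(d\tau),
\end{equation*}
where the second equality is the Borel functional calculus applied to the bounded function $\tau\mapsto e^{-iE_n\tau}$. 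Equivalently, $U=\int U_\tau\otimes P(d\tau)$ with $U_\tau=e^{-iH_C\tau}$ as in Eq.~\eqref{eq:Ecl}.

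Next I compute the reduced clock matrix elements. With $\rho_{nm}=\langle n|\rho_C|m\rangle$, I get
\begin{equation*}
  \langle n|\operatorname{Tr}_M[U(\rho_C\otimes\rho_M)U^\dagger]|m\rangle
  =\rho_{nm}\operatorname{Tr}_M\!\left[e^{-iE_n\hat\tau}\rho_M e^{iE_m\hat\tau}\right].
\end{equation*}
Cyclicity of the partial trace over $M$ moves $e^{iE_m\hat\tau}$ to the front. The two exponentials commute, being functions of the same operator, so they combine into $e^{-i(E_n-E_m)\hat\tau}=\int e^{-i\omega_{nm}\tau}P(d\tau)$ with $\omega_{nm}=E_n-E_m$. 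Therefore
\begin{equation*}
  \rho_{nm}\mapsto \rho_{nm}\int e^{-i\omega_{nm}\tau}\,d\mu_{\rho_M}(\tau),\qquad d\mu_{\rho_M}(\tau)=\operatorname{Tr}_M[\rho_M P(d\tau)],
\end{equation*}
so each coherence is multiplied by the characteristic function of $\mu_{\rho_M}$ evaluated at $\omega_{nm}$.

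It remains to check that $\mu_{\rho_M}$ is a genuine probability measure. It is positive because $P(\Delta)\ge0$ and $\rho_M\ge0$. It is countably additive because the spectral measure is additive in the weak operator topology and $\rho_M$ is trace class. It is normalized because $P(\mathbb R)=\mathbb 1$. Comparing with Eq.~\eqref{eq:Ecl}, the reduced map is exactly $\mathcal E_{\rm cl}$ with $d\mu=d\mu_{\rho_M}$. This means the quantum label is simulable by classical random proper time, consistent with Proposition~\ref{prop:dephasing}.

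The main obstacle is rigor when $\hat\tau$ is unbounded and $\rho_M$ is a general trace-class state. In that case I cannot expand exponentials as power series. I also need to justify interchanging $\operatorname{Tr}_M$ with the spectral integral. I would handle this by working only with the bounded functions $e^{-i\omega\tau}$. Then I would use the identity $\operatorname{Tr}[\rho_M f(\hat\tau)]=\int f\,d\mu_{\rho_M}$ for bounded Borel $f$. I would prove that identity first for simple functions, where it holds by definition of $\mu_{\rho_M}$, and then extend it by uniform approximation, using $|\operatorname{Tr}[\rho_M A]|\le\|A\|$.
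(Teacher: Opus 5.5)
Your proposal is correct and follows the same route as the paper: you reduce the joint evolution to the coherence factor $\operatorname{Tr}_M[\rho_M e^{-i(E_m-E_n)\hat\tau}]$, rewrite it via the spectral measure as an integral against $\operatorname{Tr}_M[\rho_M P(d\tau)]$, and check that this is a probability measure. Your block decomposition of $U$, the explicit identification of the reduced map with $\mathcal E_{\rm cl}$, and the bounded-Borel-function argument for unbounded $\hat\tau$ supply details that the paper's proof leaves implicit.
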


\begin{proof}
  Under the effective model $U=e^{-iH_C\otimes\hat\tau}$, where $\hat\tau$ is a self-adjoint label of motional proper-time branches (not a universal time operator conjugate to the Hamiltonian), an initially uncorrelated state $\rho_C\otimes\rho_M$ evolves and, after tracing out motion, yields
  \begin{equation}
    \Gamma_{mn}
    =
    \operatorname{Tr}_M
    \left[
      \rho_M e^{-i(E_m-E_n)\hat\tau}
    \right].
    \label{eq:Gamma_mn}
  \end{equation}
  Let $P(d\tau)$ be the spectral measure of $\hat\tau$. Then
  \begin{equation}
    \Gamma_{mn}
    =
    \int e^{-i(E_m-E_n)\tau}\,
    \operatorname{Tr}_M[\rho_M P(d\tau)].
    \label{eq:Gamma_spectral}
  \end{equation}
  The measure $d\mu_{\rho_M}(\tau)=\operatorname{Tr}_M[\rho_M P(d\tau)]$ is a probability measure. Clock-only data can be quantum in origin but classical in reduced statistics.
\end{proof}

A concrete branch-label model for Fock-state motional branches is given in Appendix~\ref{app:effective_label}. In summary, reduced dephasing alone cannot certify nonclassical proper-time histories --- a genuine witness must access more than a single-time reduced clock channel.

\section{Classical mixtures of specified proper-time histories}
\label{sec:proper_time_histories}

\begin{definition}[Proper-time history]
  \label{def:history}
  A proper-time history is a sequence
  \begin{equation}
    h=(\tau_1,\ldots,\tau_L)
  \end{equation}
  interleaved with known clock controls $R_1,\ldots,R_{L-1}$. The corresponding clock unitary is
  \begin{equation}
    V_h
    =
    U_{\tau_L}R_{L-1}U_{\tau_{L-1}}\cdots R_1 U_{\tau_1}.
  \end{equation}
\end{definition}

For a given experiment, let $\mathcal H$ denote the specified set of histories that may be classically selected. Each history $h\in\mathcal H$ induces a channel $\mathcal U_h(\rho)=V_h\rho V_h^\dagger$.

\begin{definition}[Classical history-instrument cone]
  \label{def:CPTH_inst}
  The positive cone generated by classical proper-time-history instrument outcomes is
  \begin{equation}
    \mathsf{CPTH}_{\rm inst}(\mathcal H)
    =
    \left\{
      \sum_{h\in\mathcal H} q_h V_h(\cdot)V_h^\dagger:
      q_h\ge0
    \right\}.
    \label{eq:CPTH_inst}
  \end{equation}
  A physical subnormalized outcome in this cone additionally satisfies $\sum_{h\in\mathcal H}q_h\le1$.
\end{definition}

\begin{definition}[Normalized classical free set]
  \label{def:CPTH}
  For normalized postselected channels, the associated free set is the convex hull
  \begin{equation}
    \mathsf{CPTH}(\mathcal H)
    =
    {\rm conv}\left\{
      \mathcal U_h:\rho\mapsto V_h\rho V_h^\dagger,
      \ h\in\mathcal H
    \right\}.
    \label{eq:CPTHH}
  \end{equation}
\end{definition}

This set accommodates classical uncertainty over histories, classical postselection, outcome-dependent reweighting, and arbitrary known clock controls within the specified histories. It excludes coherent recombination between distinct histories --- terms of the form $V_h\rho V_{h'}^\dagger$ with $h\ne h'$. A violation of $\mathsf{CPTH}(\mathcal H)$ should therefore be read as a failure of classical mixtures of the same experimentally specified proper-time histories, not as a claim about every conceivable classical history one could define with different controls.

\begin{lemma}[Choi membership feasibility]
  \label{lem:choi_feasibility}
  In the Choi representation, membership in the normalized free set is the finite convex feasibility condition
  \begin{equation}
    J(\Phi)
    =
    \sum_{h\in\mathcal H}p_h J(\mathcal U_h),
    \qquad
    p_h\ge0,
    \qquad
    \sum_{h\in\mathcal H}p_h=1,
    \label{eq:choi_normalized}
  \end{equation}
  where $J$ denotes the Choi map. For the subnormalized instrument cone, the corresponding condition is
  \begin{equation}
    J(\mathcal I)
    =
    \sum_{h\in\mathcal H}q_h J(\mathcal U_h),
    \qquad
    q_h\ge0.
    \label{eq:choi_subnormalized}
  \end{equation}
\end{lemma}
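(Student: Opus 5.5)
The plan is to use the Choi--Jamio\l{}kowski isomorphism and transport Definitions~\ref{def:CPTH_inst} and~\ref{def:CPTH} into statements about Choi operators. I fix the unnormalized maximally entangled vector $|\Omega\rangle=\sum_n|n\rangle\otimes|n\rangle$ on two copies of the clock space. I then set $J(\Phi)=(\mathrm{id}\otimes\Phi)(|\Omega\rangle\langle\Omega|)$ for any linear map $\Phi$ on clock operators. The argument needs two facts about $J$. First, $J$ is linear. Second, $J$ is injective, since $\Phi(|m\rangle\langle n|)$ can be read off as the $(m,n)$ block of $J(\Phi)$. These two facts together make $J$ a linear bijection between maps on $\mathcal L(\mathbb C^d)$ and operators on $\mathbb C^d\otimes\mathbb C^d$.

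For the normalized case, I take $\Phi\in\mathsf{CPTH}(\mathcal H)$. By Definition~\ref{def:CPTH}, $\Phi$ is a convex combination $\sum_h p_h\mathcal U_h$ with $p_h\ge0$ and $\sum_h p_h=1$. Applying $J$ and using linearity gives Eq.~\eqref{eq:choi_normalized}. Conversely, if $J(\Phi)$ admits such a decomposition, then $J(\Phi)=J(\sum_h p_h\mathcal U_h)$. Injectivity then forces $\Phi=\sum_h p_h\mathcal U_h$, so $\Phi$ lies in the convex hull. The subnormalized statement, Eq.~\eqref{eq:choi_subnormalized}, follows in the same way from Definition~\ref{def:CPTH_inst}, with the simplex constraint replaced by $q_h\ge0$. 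The optional physicality bound $\sum_h q_h\le1$ carries over as an additional linear inequality.

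It remains to justify calling this a \emph{finite} convex feasibility problem. Here I use that $\mathcal H$ is a finite, experimentally specified set, and that each $J(\mathcal U_h)=|V_h\rangle\!\rangle\langle\!\langle V_h|$ is an explicitly known rank-one operator, where $|V_h\rangle\!\rangle=(\mathbb 1\otimes V_h)|\Omega\rangle$. The unknowns are then the finitely many weights $p_h$ (or $q_h$). The constraints are finitely many linear equalities, obtained from the $d^4$ matrix entries, together with nonnegativity. This is a linear program, so membership is decidable.

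The main obstacle is not technical but one of scope. If $\mathcal H$ were allowed to be infinite, for instance a continuum of proper times, then the convex hull might not be closed. One would then need either to work with its closure or to invoke Carath\'eodory's theorem in the real space of Hermitian Choi operators, of dimension $d^4$, to reduce to finitely many histories. I would state the finiteness of $\mathcal H$ explicitly as the hypothesis under which the equivalence is exact. For the infinite case, Carath\'eodory gives a decomposition with at most $d^4+1$ terms, and this handles any point already in the hull.
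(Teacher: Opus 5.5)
Your proposal is correct and follows essentially the paper's route: the paper gives no separate proof and treats the lemma as an immediate transport of Definitions~\ref{def:CPTH_inst}--\ref{def:CPTH} through the linear Choi map, and you make this explicit, including the injectivity needed for the converse direction. Your remark that the test is a linear program in the weights $p_h$ (resp.\ $q_h$) is slightly sharper than the paper's description in Appendix~\ref{app:choi} of a semidefinite feasibility problem, since an LP is a special case of one.
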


The dependence on $\mathcal H$ is essential: a violation of $\mathsf{CPTH}(\mathcal H)$ refers to classical mixtures of the same experimentally specified histories.

\section{Choi-rank separation theorem}
\label{sec:separation}

Suppose a quantum branch register coherently labels the experimentally implemented proper-time histories in $\mathcal H$ and is then measured in a basis that erases which-history information. The conditioned clock operation can have a Kraus operator
\begin{equation}
  K_m
  =
  \sum_{h\in\mathcal H} c_{m,h}V_h,
  \label{eq:Kraus}
\end{equation}
so that $\mathcal I_m(\rho)=K_m\rho K_m^\dagger$. Expanding gives coherent cross terms $V_h\rho V_{h'}^\dagger$ with $h\ne h'$, which are absent from $\mathsf{CPTH}_{\rm inst}(\mathcal H)$.

\begin{theorem}[Choi-rank separation for conditioned proper-time-history recombination]
  \label{thm:choi}
  Assume a conditioned operation has a single Kraus operator $K_m=\sum_{h\in\mathcal H} c_{m,h}V_h$ and satisfies
  \begin{equation}
    K_m^\dagger K_m = p_m I,
    \qquad p_m>0.
    \label{eq:subunitary}
  \end{equation}
  Then $\mathcal I_m(\rho)=p_m W_m\rho W_m^\dagger$ with $W_m=K_m/\sqrt{p_m}$ is a subnormalized unitary channel. If
  \begin{equation}
    W_m\not\propto V_h
    \qquad \text{for all } h\in\mathcal H,
    \label{eq:notprop}
  \end{equation}
  then $\mathcal I_m\notin\mathsf{CPTH}_{\rm inst}(\mathcal H)$ and the normalized channel $\rho\mapsto W_m\rho W_m^\dagger$ is not in $\mathsf{CPTH}(\mathcal H)$.
\end{theorem}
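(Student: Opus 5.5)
The approach is to turn membership in either free set into a statement about Choi matrices and then use rank. A unitary channel has a rank-one Choi matrix, so a decomposition of it into a positive combination of other rank-one terms forces all those terms to be proportional to it.

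\textbf{First step.} I will verify the normalization claim. From $K_m^\dagger K_m=p_mI$ with $p_m>0$, the operator $W_m=K_m/\sqrt{p_m}$ satisfies $W_m^\dagger W_m=I$. On the finite-dimensional clock space this makes $W_m$ unitary. Hence $\mathcal I_m(\rho)=K_m\rho K_m^\dagger=p_mW_m\rho W_m^\dagger$.

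\textbf{Second step.} I will pass to Choi matrices via Lemma~\ref{lem:choi_feasibility}. Set $|\Omega\rangle=\sum_i|i\rangle|i\rangle$, so that
\begin{equation*}
  J(\mathcal I_m)=p_m|w\rangle\langle w|,
  \qquad
  J(\mathcal U_h)=|v_h\rangle\langle v_h|,
\end{equation*}
with $|w\rangle=(W_m\otimes I)|\Omega\rangle$ and $|v_h\rangle=(V_h\otimes I)|\Omega\rangle$. The map $A\mapsto(A\otimes I)|\Omega\rangle$ is linear and injective, since it is the vectorization map. Consequently $|w\rangle\propto|v_h\rangle$ holds if and only if $W_m\propto V_h$.

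\textbf{Third step: the key argument.} Suppose, for contradiction, that $\mathcal I_m\in\mathsf{CPTH}_{\rm inst}(\mathcal H)$. Then
\begin{equation*}
  p_m|w\rangle\langle w|=\sum_h q_h|v_h\rangle\langle v_h|,
  \qquad q_h\ge0,
\end{equation*}
and since $p_m>0$ at least one $q_h$ is positive. Take any $h$ with $q_h>0$ and any vector $|x\rangle\perp|w\rangle$. Sandwiching the identity with $|x\rangle$ gives
\begin{equation*}
  0=\sum_{h'}q_{h'}|\langle x|v_{h'}\rangle|^2,
\end{equation*}
and every term on the right is nonnegative. So $\langle x|v_h\rangle=0$ for every $|x\rangle\perp|w\rangle$. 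This means $|v_h\rangle\in\operatorname{span}\{|w\rangle\}$, and $|v_h\rangle\neq0$ because $V_h$ is unitary. Therefore $|v_h\rangle\propto|w\rangle$, i.e.\ $W_m\propto V_h$, which contradicts \eqref{eq:notprop}.

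\textbf{Normalized statement.} The same argument with $p_m=1$ and $\sum_h p_h=1$ shows that the normalized channel $\rho\mapsto W_m\rho W_m^\dagger$ is not in $\mathsf{CPTH}(\mathcal H)$. Alternatively, it follows directly, since $\mathsf{CPTH}(\mathcal H)\subset\mathsf{CPTH}_{\rm inst}(\mathcal H)$ and the cone is closed under positive rescaling.

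\textbf{Expected obstacle.} The only delicate point is the extremality step: the rank-one Choi matrix can be written as a positive sum of rank-one projectors only if every contributing projector is parallel to it. The orthogonal-complement argument above settles this cleanly. I would also note a bookkeeping matter: proportionality here means up to a complex scalar, and for unitaries that scalar is a global phase. This is exactly the sense in which \eqref{eq:notprop} should be read, because unitaries differing by a phase define the same channel.
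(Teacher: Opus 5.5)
Your proposal is correct and follows the paper's own proof. You pass to Choi operators, observe that the unitary channel has a rank-one Choi operator, and conclude that every $|V_h\rangle\rangle$ with $q_h>0$ must be parallel to $|W_m\rangle\rangle$, which contradicts \eqref{eq:notprop}; your orthogonal-complement argument just makes explicit the rank-one extremality step that the paper (Appendix~\ref{app:choi}) asserts without proof.
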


\begin{proof}
  A unitary channel has a rank-one Choi operator $p_m|W_m\rangle\rangle\langle\langle W_m|$, whereas a nonnegative mixture $\sum_{h\in\mathcal H} q_h |V_h\rangle\rangle\langle\langle V_h|$ can be rank one only if all nonzero $|V_h\rangle\rangle$ are parallel. Hence a classical mixture can equal $W_m(\cdot)W_m^\dagger$ only if every nonzero component is proportional to $W_m$. Condition~\eqref{eq:notprop} rules this out. The full Choi argument is given in Appendix~\ref{app:choi}.
\end{proof}

\begin{remark}[Scope and limitations]
  \label{rem:scope}
  The theorem covers single-Kraus, subnormalized-unitary outcomes. For $K^\dagger K\not\propto I$ or multi-Kraus outcomes, one must directly apply the Choi cone membership test of Lemma~\ref{lem:choi_feasibility}. The minimal Ramsey ports constructed below satisfy the subnormalized-unitary condition. The relevant resource is the conditioned coherent recombination of a specified set of proper-time histories into an operation that no classical mixture of those histories can reproduce. A full characterization of all nonclassical history processes lies beyond the minimal witness developed here.
\end{remark}

\section{Minimal Ramsey witness}
\label{sec:minimal_witness}

Consider two branches
\begin{equation}
  U_\pm=e^{\mp i\theta Z/2}
  \label{eq:branches}
\end{equation}
with an intermediate Ramsey $\pi/2$ pulse $R$. The two-history set is
\begin{equation}
  \mathcal H_2=\{+,-\},
  \qquad
  V_+=U_+R\,U_+,
  \quad
  V_-=U_-R\,U_-,
  \label{eq:Vpm}
\end{equation}
where
\begin{equation}
  R=\frac{1}{\sqrt2}\!\begin{pmatrix}1&1\\1&-1\end{pmatrix}.
\end{equation}
If the motional branch is ignored, the clock sees the averaged channel
\begin{equation}
  \mathcal E_{\rm avg}(\rho)
  =
  \frac12 V_+\rho V_+^\dagger
  +
  \frac12 V_-\rho V_-^\dagger,
  \label{eq:Eavg}
\end{equation}
which is an element of $\mathsf{CPTH}(\mathcal H_2)$.

If instead the motion is measured in a history-erasing basis, the conditioned Kraus operator for the symmetric outcome is
\begin{equation}
  K_+=\frac12(V_+ + V_-).
  \label{eq:Kplus}
\end{equation}
A direct calculation (Appendix~\ref{app:ramsey}) gives
\begin{equation}
  \begin{aligned}
    K_+
    &=
    \frac{1}{\sqrt2}
    \begin{pmatrix}
      \cos\theta & 1\\
      1 & -\cos\theta
    \end{pmatrix},\\[4pt]
    K_+^\dagger K_+
    &=
    \frac{1+\cos^2\theta}{2}I.
  \end{aligned}
  \label{eq:Kplus_explicit}
\end{equation}
The conditioned operation is therefore a subnormalized unitary channel. For $\theta\notin\pi\mathbb Z$, the corresponding unitary $W_+$ is not proportional to $V_+$ or $V_-$, and Theorem~\ref{thm:choi} places the conditioned channel outside the classical mixture of the same two histories.

\paragraph{Witness table.}
The two outcomes of the Ramsey protocol are summarized in Table~\ref{tab:witness}. The bright-port outcome ($+$) occurs with high probability and carries a small signal; the dark-port outcome ($-$) occurs with low probability but carries unit conditional contrast.

\begin{table*}[t]
  \centering
  \caption{Two-history Ramsey witness. $K_\pm$ are the Kraus operators, $p_{\rm succ}$ is the success probability, and the last column gives the conditional clock population imbalance for input $|g\rangle$.}
  \label{tab:witness}
  \begin{tabular}{lccc}
    \hline
    Outcome & $K$ & $p_{\rm succ}$ & $\langle Z\rangle$ for $|g\rangle$ \\
    \hline
    Bright ($+$) & $K_+ = (V_++V_-)/2$ & $(1+\cos^2\theta)/2$ & $\displaystyle-\frac{\sin^2\theta}{1+\cos^2\theta}$\\[8pt]
    Dark ($-$) & $K_- = (V_+-V_-)/2$ & $\sin^2\theta/2$ & $1$\\
    \hline
  \end{tabular}
\end{table*}

For small $\theta$, the bright-port signal scales as $\theta^2/2$, requiring $N_{\rm trial}^{(+)}\sim4s^2/\theta^4$ experimental trials for $s\sigma$ significance. The dark port has unit conditional contrast and success probability $p_-\simeq\theta^2/2$, giving $N_{\rm trial}^{(-)}\sim2s^2/\theta^2$.

For the bright port with input $|g\rangle$, every channel in $\mathsf{CPTH}(\mathcal H_2)$ gives $\langle Z\rangle_{\rm cl}=0$, while
\begin{equation}
  \langle Z\rangle_+
  =
  \frac{\cos^2\theta-1}{\cos^2\theta+1}
  \simeq -\frac{\theta^2}{2}
  \qquad (\theta\ll1).
  \label{eq:Zplus}
\end{equation}
A nonzero postselected imbalance thus certifies the failure of classical mixtures of the specified Ramsey histories.

The dark-port outcome is
\begin{equation}
  \begin{aligned}
    K_-&=\frac12(V_+-V_-)=-\frac{i\sin\theta}{\sqrt2}I,\\
    p_-&=\frac{\sin^2\theta}{2},\\
    \langle Z\rangle_-&=1
    \qquad (|g\rangle\ \text{input}).
  \end{aligned}
  \label{eq:darkport}
\end{equation}
It delivers unit conditional contrast with a low postselection probability.

\begin{protocol}[Two-history Ramsey witness]
  \label{prot:ramsey}
  \begin{enumerate}
    \item Prepare a quantum branch register in a coherent superposition labeling the specified proper-time histories.
    \item Let the clock undergo the Ramsey histories $V_h$ controlled by the branch register.
    \item Measure the branch register in a history-erasing basis.
    \item Measure the clock $Z$ observable on the postselected ensemble.
    \item Compare the conditional signal $\langle Z\rangle$ against the classical-mixture prediction $\langle Z\rangle_{\rm cl}=0$ for $\mathsf{CPTH}(\mathcal H_2)$.
  \end{enumerate}
\end{protocol}

\begin{figure*}[t]
  \centering
  \begin{minipage}[t]{0.32\textwidth}
    \centering
    \includegraphics[width=\textwidth]{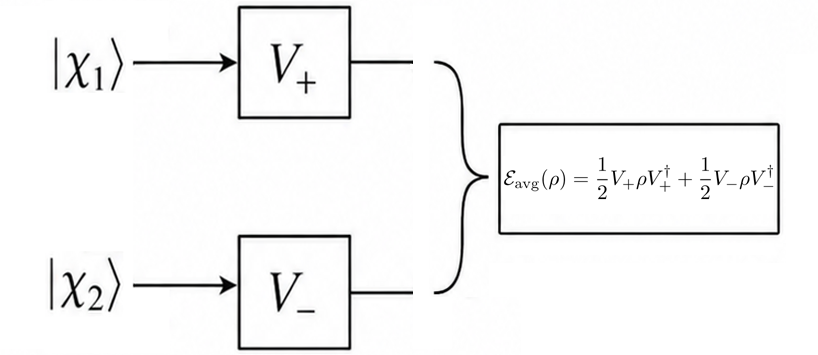}
    \vspace{-4pt}
    \par\footnotesize\sffamily(a)
  \end{minipage}
  \hfill
  \begin{minipage}[t]{0.32\textwidth}
    \centering
    \includegraphics[width=\textwidth]{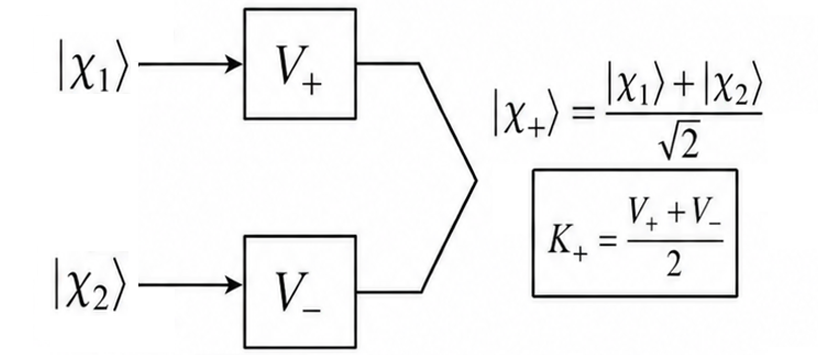}
    \vspace{-4pt}
    \par\footnotesize\sffamily(b)
  \end{minipage}
  \hfill
  \begin{minipage}[t]{0.32\textwidth}
    \centering
    \includegraphics[width=\textwidth]{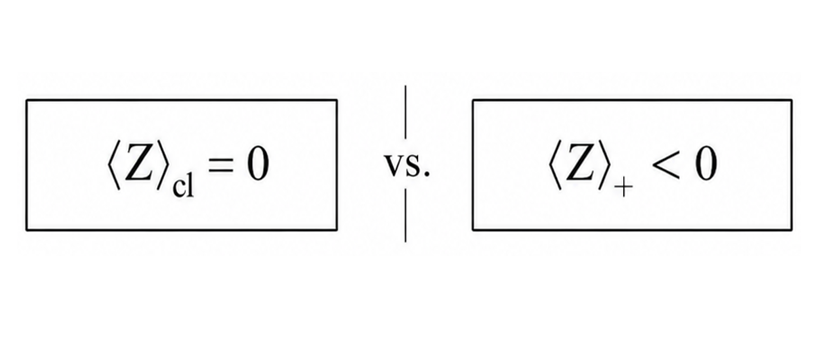}
    \vspace{-4pt}
    \par\footnotesize\sffamily(c)
  \end{minipage}
  \caption{\label{fig:histories}%
    \textbf{Classical mixtures and selective erasure of proper-time histories.}
    (a) Two motional branches $|\chi_1\rangle,|\chi_2\rangle$ label the specified proper-time histories $V_\pm=U_\pm R\,U_\pm$, with $U_\pm=e^{\mp i\theta Z/2}$. If the motional label is ignored, the clock sees the free mixture $\mathcal E_{\rm avg}\in\mathsf{CPTH}(\mathcal H_2)$.
    (b) Measuring the motion in the history-erasing basis produces the bright port $K_+=(V_++V_-)/2$, while the orthogonal erasing outcome gives the dark port $K_-=(V_+-V_-)/2$. The witness depends on erasing this proper-time-history label.
    (c) For input $|g\rangle$, every free mixture of the specified histories predicts $\langle Z\rangle_{\rm cl}=0$, while the bright port yields the negative imbalance in Eq.~\eqref{eq:Zplus}; the complementary dark-port signal is summarized in Table~\ref{tab:witness}.
  }
\end{figure*}

\section{Selective erasure and null conditions}
\label{sec:selective_erasure}

The witness turns on a selective condition: the erased degree of freedom must be the label of the proper-time history itself, not merely an ancillary path or motional tag. We now separate proper-time-history erasure from ordinary quantum erasure.

\subsection{Proper-time-history label condition}

A motional label is a proper-time-history label only if
\begin{equation}
  |\chi_h\rangle \longleftrightarrow V_h
  \label{eq:label_condition}
\end{equation}
with distinct history unitaries generated by different effective proper-time evolutions. In a trapped-ion context, a sufficient physical condition is
\begin{equation}
  \Delta\langle p^2\rangle\neq0,
  \qquad
  \Delta\tau\simeq -\frac{t}{2m^2c^2}\Delta\langle p^2\rangle\neq0.
  \label{eq:sufficient_physical}
\end{equation}
The motional system thereby labels the clock histories themselves. If the motional branches induce the same history unitary, $V_1=V_2$, the setting remains inside the same classical history-mixture description and no certification occurs.

\subsection{Ordinary erasure versus proper-time-history erasure}

Ordinary quantum erasure~\cite{ScullyDruhl1982} restores interference between path or motional labels. It is an established ingredient of quantum optics and matter-wave interferometry. The present protocol uses erasure for a different purpose: certification that the conditioned clock operation cannot be simulated by classical mixtures of the specified proper-time histories.

Three null conditions make the distinction concrete:
\begin{enumerate}
  \item \emph{Identical history unitaries.} If $V_1=V_2$, erasure produces no cross terms between distinct histories and the conditioned operation remains in $\mathsf{CPTH}(\mathcal H)$.
  \item \emph{Removal of intermediate Ramsey control.} Removing the intermediate $R$ pulse reduces the protocol to a simpler phase-erasure setting and removes the population-witness structure used here. The present witness is designed to certify history recombination through a nontrivial clock control sequence, rather than through a single diagonal phase-erasure event. The certification itself is determined by membership or non-membership in $\mathsf{CPTH}(\mathcal H)$, not by a commutator condition between $V_+$ and $V_-$.
  \item \emph{Unrelated label.} If the erased degree of freedom is unrelated to the induced proper-time histories, the erasure event is an ordinary quantum-eraser test rather than a proper-time-history witness.
\end{enumerate}

Certification therefore demands more than a successful erasure event: the erased label must encode the proper-time-history identity, and the conditioned operation must satisfy $\mathcal I_m\notin\mathsf{CPTH}_{\rm inst}(\mathcal H)$ for the specified history set.

\section{Physical realization and target scales}
\label{sec:physical}

The protocol is mapped to a trapped-ion optical-clock setting. This section identifies the physical ingredients and parameter scales for a future implementation. Motional Fock-state branches with different kinetic energies induce different effective proper times~\cite{Sorci2026,Brewer2019}:
\begin{equation}
  \theta
  \simeq
  \frac{\omega_0 t\hbar\Omega}{4mc^2}\Delta n,
  \qquad
  \Delta n=|n_1-n_2|.
  \label{eq:theta_ion}
\end{equation}
Equivalently,
\begin{equation}
  \theta
  =
  \kappa
  \left(
    \frac{\Omega/2\pi}{1\,{\rm MHz}}
  \right)
  \left(
    \frac{t}{1\,{\rm s}}
  \right)
  \Delta n,
  \label{eq:theta_kappa}
\end{equation}
with representative values given in Table~\ref{tab:ion_params}.

\begin{table}[b]
  \centering
  \caption{Representative ion clock parameters (transition frequencies from Refs.~\cite{Ludlow2015,Brewer2019}).}
  \label{tab:ion_params}
  \begin{tabular}{lcc}
    \hline
    Ion clock & Frequency & $\kappa$\\
    \hline
    ${}^{27}{\rm Al}^+$          & $1.12\;{\rm PHz}$  & $2.9\times10^{-4}$\\
    ${}^{171}{\rm Yb}^+$ (E3)    & $642\;{\rm THz}$   & $2.6\times10^{-5}$\\
    ${}^{171}{\rm Yb}^+$ (E2)    & $688\;{\rm THz}$   & $2.8\times10^{-5}$\\
    ${}^{88}{\rm Sr}^+$          & $445\;{\rm THz}$   & $3.5\times10^{-5}$\\
    \hline
  \end{tabular}
\end{table}

The Al$^+$ platform is favorable because $\theta\propto\omega_0/m$. As an illustrative single-round scale,
\begin{equation}
  \begin{gathered}
    \Omega/2\pi=1\,{\rm MHz},
    \quad
    t=1\,{\rm s},
    \quad
    \Delta n=100,\\
    \theta\simeq 2.9\times10^{-2}.
  \end{gathered}
  \label{eq:illustrative_scale}
\end{equation}

A minimal selective-erasure protocol, based on standard trapped-ion motional-state preparation and readout primitives~\cite{Meekhof1996,Johnson2017,FluehmannHome2020}, prepares $|\chi_+\rangle=(|\chi_1\rangle+|\chi_2\rangle)/\sqrt2 = A|0\rangle$, lets the clock undergo the Ramsey histories $V_\pm=U_\pm R\,U_\pm$, and then applies $A^\dagger$ followed by motional ground-state or Fock-state readout. Successful postselection implements the projection onto the history-erasing state $|\chi_+\rangle$ and yields $K_+=(V_++V_-)/2$.

A phase-amplified implementation can be summarized by an effective parameter $\theta_{\rm eff}$, engineered so that the branch-dependent phase accumulates coherently. In a simple phenomenological visibility model, imperfect history-erasure visibility, clock coherence, and motional coherence are collected into a single parameter $\eta_{\rm tot}$, giving
\begin{equation}
  \langle Z\rangle_+
  =
  -\frac{\eta_{\rm tot}\sin^2\theta_{\rm eff}}
  {1+\eta_{\rm tot}\cos^2\theta_{\rm eff}}.
  \label{eq:visibility_model}
\end{equation}
This reduces to the ideal witness at $\eta_{\rm tot}=1$ and to the free-history prediction $\langle Z\rangle_+=0$ at $\eta_{\rm tot}=0$. The dark port trades this small bright-port imbalance for unit conditional contrast at success probability $p_-\simeq\theta_{\rm eff}^2/2$.

For numerical illustration, $\theta_{\rm eff}=0.1$ gives $|\langle Z\rangle_+|\simeq5.0\times10^{-3}$, while $\theta_{\rm eff}=0.3$ gives $|\langle Z\rangle_+|\simeq4.6\times10^{-2}$. A phase-amplified implementation with an effective accumulation factor $L_{\rm eff}\sim10$, for which $\theta_{\rm eff}\simeq L_{\rm eff}\theta$, would reach $\theta_{\rm eff}\simeq0.3$ and hence a percent-level witness. These estimates serve as feasibility-scale targets for future optical-clock implementations. Detailed visibility requirements and decoherence channels are discussed in Appendix~\ref{app:visibility}.

\section{Relation to prior work}
\label{sec:prior}

The present work connects four lines of research.

\emph{Quantum time dilation and clock proper time.} It is now established that quantum clocks carry relativistic phases, that motional time-dilation effects imprint on optical ion clocks, and that gravitational time dilation has been proposed to induce universal decoherence~\cite{Zych2011,Pikovski2015,SmithAhmadi2020,Sorci2026}. The complementary question --- which reduced or conditioned signals from such clocks are classically simulable and which are not --- is the focus of the present paper. We provide a certification boundary rather than a relativistic phase calculation.

\emph{Quantum eraser and self-interfering clocks.} Quantum eraser experiments restore interference by erasing which-path information~\cite{ScullyDruhl1982}, and self-interfering clocks have been used as ``which-path'' witnesses~\cite{Margalit2015}. Here erasure is used differently: the erased label must be a proper-time-history label, and the conditioned operation must leave the classical history-mixture set $\mathsf{CPTH}(\mathcal H)$. Ordinary erasure alone does not constitute a proper-time-history witness.

\emph{Quantum reference frames and relational time.} The Page--Wootters and related relational-time frameworks treat clocks or reference frames as quantum systems, describing dynamics without external time parameters~\cite{PageWootters1983,Giacomini2019}. Our approach is complementary: we take a clock as a probe of proper-time histories and ask when the resulting clock channel certifies nonclassical proper-time structure. These are different operational questions.

\emph{Resource theories and channel witnesses.} Resource theories of magic, coherence, and general quantum resources classify free (classically simulable) sets and quantify resource content via robustness and witness measures~\cite{HowardCampbell2017,Baumgratz2014,ChitambarGour2019}. We adopt the convex-operational logic of these approaches --- defining a free set $\mathsf{CPTH}(\mathcal H)$ and a witness of non-membership --- without claiming a full resource theory of quantum proper time. The free set is finite-dimensional and history-specific; the witness is the linear functional $\operatorname{Tr}[Z\,\Phi(|g\rangle\langle g|)]$ that vanishes on the free set but detects the conditioned operation. An optional robustness-type diagnostic is discussed in Appendix~\ref{app:robustness}.

\section{Discussion and outlook}
\label{sec:discussion}

We have distinguished three levels of proper-time signatures in quantum clocks. First, reduced clock dephasing, even when generated by quantum motion, can be written as a classical characteristic function. Second, averaged Ramsey histories form classical mixtures of the specified history unitaries. Third, conditioned Ramsey-history recombination can exceed such mixtures by producing coherent cross terms between histories. The usefulness of the construction comes from applying standard convex-operational tools to isolate a classical-simulable sector for specified proper-time histories.

The protocol uses established history-erasing and clock-interferometric ingredients to test a new simulability boundary. Selective erasure becomes a proper-time-history witness when the erased label determines distinct history unitaries $V_h$ and the conditioned operation falls outside $\mathsf{CPTH}(\mathcal H)$. The same convex set supports a robustness-type quantifier and linear witness bounds (Appendix~\ref{app:robustness}), extending the resource-theoretic idea of quantifying distance from a classically simulable sector.

The optical-clock mapping identifies a route toward a minimal certification protocol. Motional branches with different kinetic energies label different effective proper-time histories; inverse motional preparation and ground-state or Fock-state readout implement the history-erasing projection. The feasibility estimates set a target scale for future implementations.

To summarize: nonclassical proper-time histories are certified by the failure of classical mixtures of the same specified histories, not by reduced dephasing alone. We have given a minimal sufficient witness and certification framework. A complete resource theory or necessary-and-sufficient characterization of all forms of quantum proper-time nonclassicality remains open, as do extensions to multi-history sets, continuous history labels, and gravitational proper-time scenarios.

\begin{acknowledgments}
  This work was supported by the CPS-Yangtze Delta Region Industrial Innovation Center of Quantum and Information Technology-MindSpore Quantum Open Fund.
\end{acknowledgments}

\bibliographystyle{quantum}
\bibliography{proper_time_prl}

\onecolumn
\appendix

\section{Two-level dephasing no-go}
\label{app:dephasing}

We provide the explicit construction referenced in Proposition~\ref{prop:dephasing}. Consider a two-level clock with transition frequency $\omega$. A single-time dephasing signal is specified by a complex number $\Gamma$ satisfying $|\Gamma|\le1$. Write $\Gamma=re^{i\phi}$ with $0\le r\le1$. Choose $\tau_1,\tau_2$ such that
\begin{equation}
  e^{-i\omega\tau_1}=e^{i\phi},
  \qquad
  e^{-i\omega\tau_2}=-e^{i\phi}.
\end{equation}
The probability measure
\begin{equation}
  d\mu(\tau)
  =
  \frac{1+r}{2}\delta(\tau-\tau_1)d\tau
  +
  \frac{1-r}{2}\delta(\tau-\tau_2)d\tau
\end{equation}
gives $\Gamma=\int d\mu(\tau)e^{-i\omega\tau}$. Therefore every two-level single-time dephasing factor is simulable by a classical random proper-time distribution. The result is tight: the unit disk is exactly the convex hull of the pure-phase circle $|e^{-i\omega\tau}|=1$, so the dephasing parameter space saturates the classical-simulable region.

\section{Effective proper-time label}
\label{app:effective_label}

We expand the construction of Proposition~\ref{prop:effective}. Let $H_C=\sum_n E_n |n\rangle\langle n|$ and consider the effective joint evolution
\begin{equation}
  U=e^{-iH_C\otimes\hat\tau}.
\end{equation}
Here $\hat\tau$ is used as a self-adjoint label of motional proper-time branches, not as a universal time operator conjugate to the Hamiltonian. For an initial state $\rho_C\otimes\rho_M$, the clock coherence $|m\rangle\langle n|$ is multiplied by
\begin{equation}
  \Gamma_{mn}
  =
  \operatorname{Tr}_M
  \left[
    \rho_M e^{-i(E_m-E_n)\hat\tau}
  \right].
\end{equation}
Let $P(d\tau)$ denote the spectral measure of $\hat\tau$. Then
\begin{equation}
  \Gamma_{mn}
  =
  \int e^{-i(E_m-E_n)\tau}
  \operatorname{Tr}_M[\rho_M P(d\tau)].
\end{equation}
The measure $d\mu_{\rho_M}(\tau)=\operatorname{Tr}_M[\rho_M P(d\tau)]$ is a probability measure. Hence the reduced clock signal is equivalent to a classical random proper-time distribution, even though the label arises from a quantum system.

A concrete branch-label model can be written for Fock-state motional branches. If the branch $|n\rangle$ induces
\begin{equation}
  \tau_n \simeq t-\frac{t}{2m^2c^2}\langle p^2\rangle_n,
  \qquad
  \langle p^2\rangle_n=m\hbar\Omega\left(n+\frac12\right),
\end{equation}
then one may take
\begin{equation}
  \hat\tau=\sum_n \tau_n |n\rangle\langle n|,
  \qquad
  P(d\tau)=\sum_n |n\rangle\langle n|\,\delta(\tau-\tau_n)d\tau .
\end{equation}
In this concrete model, $\hat\tau$ functions as a proper-time branch label tied to the motional spectrum.

\section{Choi-rank proof and membership test}
\label{app:choi}

We provide the full Choi-rank argument for Theorem~\ref{thm:choi}. Assume a conditioned quantum operation has a single Kraus operator
\begin{equation}
  K=\sum_{h\in\mathcal H} c_h V_h
\end{equation}
and satisfies $K^\dagger K=pI$ with $p>0$. Then $W=K/\sqrt p$ is unitary and the conditioned operation is $\mathcal I(\rho)=pW\rho W^\dagger$.

Suppose for contradiction that $\mathcal I\in\mathsf{CPTH}_{\rm inst}(\mathcal H)$. Then
\begin{equation}
  pW\rho W^\dagger
  =
  \sum_{h\in\mathcal H} q_h V_h\rho V_h^\dagger,
  \qquad q_h\ge0.
\end{equation}
Taking Choi representations gives
\begin{equation}
  p|W\rangle\rangle\langle\langle W|
  =
  \sum_{h\in\mathcal H} q_h|V_h\rangle\rangle\langle\langle V_h|.
\end{equation}
The left-hand side has rank one. A nonnegative sum of rank-one projectors has rank one only if all nonzero vectors in the sum are parallel. Hence every $h$ with $q_h>0$ satisfies $V_h\propto W$. Therefore, if $W$ is not proportional to any $V_h$ in the specified history set $\mathcal H$, no classical mixture of those histories can reproduce the conditioned operation.

For outcomes that do not satisfy $K^\dagger K\propto I$ or that have multiple Kraus operators, one must use the full Choi cone membership test of Lemma~\ref{lem:choi_feasibility}. The test is a finite-dimensional semidefinite feasibility problem: checking whether a given Choi operator $J(\mathcal I)$ can be written as a nonnegative linear combination of the known $J(\mathcal U_h)$ for $h\in\mathcal H$.

\section{Ramsey algebra}
\label{app:ramsey}

We provide the explicit calculation for the minimal two-history Ramsey witness. Let
\begin{equation}
  U_\pm=e^{\mp i\theta Z/2},
  \qquad
  R=\frac{1}{\sqrt2}
  \begin{pmatrix}
    1&1\\1&-1
  \end{pmatrix},
\end{equation}
and define $V_+=U_+R\,U_+$ and $V_-=U_-R\,U_-$.

\paragraph{Bright port.}
For the specified two-history set $\mathcal H_2=\{+,-\}$, the history-erased Kraus operator is
\begin{equation}
  K_+=\frac12(V_++V_-)
  =
  \frac{1}{\sqrt2}
  \begin{pmatrix}
    \cos\theta & 1\\
    1 & -\cos\theta
  \end{pmatrix}.
\end{equation}
Thus
\begin{equation}
  K_+^\dagger K_+
  =
  \frac{1+\cos^2\theta}{2}I,
  \qquad
  p_+=\frac{1+\cos^2\theta}{2}.
\end{equation}
For input $|g\rangle=(1,0)^{\mathsf T}$,
\begin{equation}
  K_+|g\rangle
  =
  \frac{1}{\sqrt2}(\cos\theta|g\rangle+|e\rangle),
\end{equation}
and the normalized output is
\begin{equation}
  |\psi_+\rangle
  =
  \frac{\cos\theta|g\rangle+|e\rangle}{\sqrt{1+\cos^2\theta}}.
\end{equation}
Therefore,
\begin{equation}
  \langle Z\rangle_+
  =
  \frac{\cos^2\theta-1}{\cos^2\theta+1}
  =
  -\frac{\sin^2\theta}{1+\cos^2\theta}.
\end{equation}
For every classical mixture in $\mathsf{CPTH}(\mathcal H_2)$, each history maps $|g\rangle$ to an equal-population superposition, and hence $\langle Z\rangle_{\rm cl}=0$.

\paragraph{Dark port.}
The complementary outcome is
\begin{equation}
  K_-=\frac12(V_+-V_-)
  =
  -\frac{i\sin\theta}{\sqrt2}I,
\end{equation}
proportional to the identity. Then
\begin{equation}
  K_-^\dagger K_-=\frac{\sin^2\theta}{2}I,
  \qquad
  p_-=\frac{\sin^2\theta}{2}.
\end{equation}
Since $W_-=I$ up to a global phase, we have $\langle Z\rangle_- = \langle Z\rangle_{\rm in}$ for any input. For input $|g\rangle$ with $Z|g\rangle=|g\rangle$, this gives unit conditional contrast $\langle Z\rangle_-=1$. The dark-port success probability scales as $\theta^2/2$ for small $\theta$, so the number of experimental trials required to reach $s\sigma$ significance of successful dark-port events is $N_{\rm trial}^{(-)}\sim 2s^2/\theta^2$ (or $2s^2/\theta_{\rm eff}^2$ for a phase-amplified effective angle).

\section{Robustness-type diagnostics}
\label{app:robustness}

The free set of normalized classical proper-time-history channels is $\mathsf{CPTH}(\mathcal H)$. This diagnostic is inspired by the general resource-theoretic logic used in magic-state and coherence resource theories~\cite{HowardCampbell2017,Baumgratz2014,ChitambarGour2019}. We use the term ``robustness'' only as a diagnostic relative to the fixed finite history set, not as a complete resource theory. For a normalized postselected channel $\Phi$, one may define the robustness-type witness diagnostic
\begin{equation}
  R_{\rm PTH}^{\mathcal H}(\Phi)
  =
  \inf_{\Psi}
  \left\{
    s\ge0:
    \frac{\Phi+s\Psi}{1+s}\in\mathsf{CPTH}(\mathcal H)
  \right\},
  \label{eq:robustness}
\end{equation}
where the infimum is over quantum channels $\Psi$ acting on the clock. The value may be infinite for channels outside the affine span reachable by this mixing. This quantity optionally quantifies the obstruction to simulation by the specified history set; the separation theorem itself does not rely on it. A complete resource theory of quantum proper time is left for future work.

By construction,
\begin{equation}
  R_{\rm PTH}^{\mathcal H}(\Phi)=0
  \quad\Longleftrightarrow\quad
  \Phi\in\mathsf{CPTH}(\mathcal H).
\end{equation}
If $\Phi=\mathcal U_W$ is a unitary channel and $W$ is not proportional to any $V_h$ with $h\in\mathcal H$, then the Choi-rank theorem implies $R_{\rm PTH}^{\mathcal H}(\Phi)>0$.

A linear witness gives a lower bound. Let $F$ be a real linear functional on channels such that $F(\Phi_{\rm cl})=0$ for all $\Phi_{\rm cl}\in\mathsf{CPTH}(\mathcal H)$ and $|F(\Psi)|\le1$ for all quantum channels $\Psi$. If $(\Phi+s\Psi)/(1+s)\in\mathsf{CPTH}(\mathcal H)$, then $s\ge |F(\Phi)|$, and therefore
\begin{equation}
  R_{\rm PTH}^{\mathcal H}(\Phi)\ge |F(\Phi)|.
\end{equation}
For the minimal Ramsey protocol,
\begin{equation}
  F(\Phi)=\operatorname{Tr}\left[Z\,\Phi(|g\rangle\langle g|)\right]
\end{equation}
vanishes on $\mathsf{CPTH}(\mathcal H_2)$, whereas $F(\Phi_+)=\langle Z\rangle_+$. Hence
\begin{equation}
  R_{\rm PTH}^{\mathcal H_2}(\Phi_+)
  \ge
  \frac{\sin^2\theta}{1+\cos^2\theta}.
\end{equation}

The linear witness and robustness-type diagnostic therefore supply optional resource-theoretic quantifiers of the same obstruction detected by the Choi-rank criterion.

\section{Visibility and trapped-ion estimates}
\label{app:visibility}

\subsection{Phenomenological visibility model}

For imperfect coherence, we use a phenomenological visibility model. Let
\begin{equation}
  \eta_{\rm tot}=\eta_h\eta_c\eta_m
\end{equation}
combine history-erasure visibility, clock coherence, and motional-branch coherence. In this simple model,
\begin{equation}
  \langle Z\rangle_+
  =
  -\frac{\eta_{\rm tot}\sin^2\theta_{\rm eff}}
  {1+\eta_{\rm tot}\cos^2\theta_{\rm eff}}.
\end{equation}
This expression reduces to the ideal witness at $\eta_{\rm tot}=1$ and to the free-history prediction $\langle Z\rangle_+=0$ at $\eta_{\rm tot}=0$.

The factors have direct physical interpretations. The factor $\eta_h$ summarizes the fidelity of the selective history-erasure operation, including imperfect inverse motional preparation, residual distinguishability of the motional branches, and readout errors in the postselection. The factor $\eta_c$ describes ordinary clock coherence during the Ramsey sequence, including laser phase noise and clock-state dephasing. The factor $\eta_m$ describes the coherence of the motional branch superposition; motional heating, trap-frequency noise, and off-resonant coupling reduce this factor.

This visibility decomposition identifies which experimental quantities must remain appreciable for the witness to survive. A dedicated implementation would require platform-specific optimization of motional-state preparation, heating rate, clock interrogation time, and postselection fidelity.

\subsection{Effective amplification}

The single-round witness scales as $\langle Z\rangle_+\simeq -\theta^2/2$ for $\theta\ll1$. For feasibility estimates we introduce an effective accumulated phase $\theta_{\rm eff}$ for an engineered phase-amplified implementation. In a sequence designed to accumulate the branch-dependent phase coherently, one may have approximately $\theta_{\rm eff}\simeq L\theta$, but this scaling is protocol-dependent.

With the replacement $\theta\mapsto\theta_{\rm eff}$, the ideal witness becomes
\begin{equation}
  \langle Z\rangle_+
  =
  -\frac{\sin^2\theta_{\rm eff}}{1+\cos^2\theta_{\rm eff}},
\end{equation}
and, for small $\theta_{\rm eff}$,
\begin{equation}
  \langle Z\rangle_+\simeq -\frac{\theta_{\rm eff}^2}{2}.
\end{equation}
The successful postselected shot estimate at $s\sigma$ significance is
\begin{equation}
  N_{\rm succ}\sim \frac{s^2}{\langle Z\rangle_+^2}
  \sim \frac{4s^2}{\theta_{\rm eff}^4}
\end{equation}
in the small-angle ideal limit.

\subsection{Trapped-ion parameters}

For Fock-state motional branches,
\begin{equation}
  \theta
  \simeq
  \frac{\omega_0t\hbar\Omega}{4mc^2}\Delta n.
\end{equation}
Representative values, based on standard optical-ion-clock transition scales and trapped-ion motional parameters~\cite{Sorci2026,Brewer2019}, are shown in Table~\ref{tab:ion_params}. They are included to set an illustrative scale; the visibility factors discussed above determine whether such a scale can be reached in a platform-specific implementation.

The Al$^+$ platform is favorable because $\theta\propto\omega_0/m$. As an illustrative single-round scale,
\begin{equation}
  \begin{gathered}
    \Omega/2\pi=1\,{\rm MHz},
    \qquad
    t=1\,{\rm s},
    \qquad
    \Delta n=100,\\
    \theta\simeq 2.9\times10^{-2}.
  \end{gathered}
\end{equation}
A phase-amplified implementation with an effective accumulation factor $L_{\rm eff}\sim10$, for which $\theta_{\rm eff}\simeq L_{\rm eff}\theta$, would reach $\theta_{\rm eff}\simeq0.3$ and hence a percent-level witness.

\end{document}